\newtheorem{thm}{Proposition}
\title{\LARGE \bf
Scalable Routing in SDN-enabled Networks with Consolidated Middleboxes
}
\author{
  \alignauthor \ Andrey Gushchin\\
 \affaddr{Cornell University}
  \email{avg36@cornell.edu}
  \alignauthor \ Anwar Walid\\
\affaddr{Bell Labs, Alcatel-Lucent}
  \email{anwar.walid@alcatel-lucent.com}
  \alignauthor \ Ao Tang\\
\affaddr{Cornell University}
  \email{atang@ece.cornell.edu}
}
\begin{document}


\maketitle
\thispagestyle{empty}
\pagestyle{empty}

\begin{abstract}
Middleboxes are special network devices that perform various functions such as enabling security and efficiency.
SDN-based routing approaches in networks with middleboxes need to address resource constraints, such as memory in the switches and processing power of middleboxes, and traversal constraint where a flow must visit the required middleboxes in a specific order. 
In this work we propose a solution based on MultiPoint-To-Point Trees (MPTPT) for routing traffic in SDN-enabled networks with consolidated middleboxes. 
We show both theoretically and via simulations that our solution significantly reduces the number of routing rules in the switches, while guaranteeing optimum throughput and meeting processing requirements. Additionally, the underlying algorithm has low complexity making it suitable in dynamic network environment.
\end{abstract}


\begin{keywords}
Software-Defined Networking, Middlebox, Multipoint-to-Point Tree, Traffic Engineering
\end{keywords}

\section{INTRODUCTION}

Middleboxes (e.g. proxies, firewalls, IDS, WAN optimizers, etc.) are special network devices that perform functional processing of network traffic in order to achieve a certain level of security and performance.
Each network flow may require certain set of functions. In some cases  these functions can be applied only in a particular order,
which makes routing in networks with middleboxes under limited resources constraints even a more difficult task.
Mechanism of controlling routing through the specified functional sequence is called Service Function Chaining (SFC).
Logically centralized traffic control offered by SDN enables traffic routing optimization (in terms of device costs, total throughput, load balancing, link utilizations, etc.), while satisfying a correct traversal of network middleboxes for each flow.
Several recent works (e.g. \cite{c1}, \cite{c12}, \cite{c13}) provide relevant solutions.


Functionality provided by middleboxes can be incorporated in the network in several ways.
Traditional middlebox is a standalone physical device that can typically perform one network function, and may be located at an ingress switch.
With the development of the Network Function Virtualization (NFV), middleboxes may be implemented using Virtual Machines (VMs) that can be flexibly installed at the Physical Machines (PMs).
In addition, virtualization enables implementation of the consolidated middleboxes \cite{c4}, where a flow receives all of its required service functions at a single machine.
The consolidated middlebox model simplifies traffic routing and helps reduce the number of routing rules in the switches.

In this paper, we follow the model in \cite{c17}, and assume that each middlebox function is an application that can be installed at certain VMs within the PMs.
It is also assumed that every flow obtains all its required functional treatment at a single PM, and thus the consolidated middlebox model is implied in the paper.
Network function consolidation and flexible implementation of middleboxes were previously discussed, for example in \cite{c13}, \cite{c14}, \cite{c15} and \cite{c4}.

Depending on the network traffic environment, two types of routing schemes can be developed: offline, where all required traffic demands are given or can be estimated (for example, using a service level agreement between the customer and the provider), and online, where demands are unknown and a routing solution for each coming flow is made based on the flow class and the current state of the network. 
A solution obtained by a routing scheme can be converted into a set of routing rules that are  installed in the switches.
Different criteria can be used to characterize the achievable performance of a routing scheme: total throughput, average delay, maximum PM utilization, etc.
Besides achieving a desired network performance, a routing scheme must also satisfy resource and routing constraints.
Additionally, three new constraints are of a special interest in the SDN-enabled networks with middleboxes.
\vspace{-3pt}
\begin{itemize}
\item {\bf {Switch memory capacities:}} number of rules installed in a single switch is limited by its memory capacity.
Ternary Content-Addressable Memory (TCAM) used in SDN switches is a scarce resource which is expensive both in terms of cost and power consumption.
\vspace{-7pt}
\item {\bf{Middlebox processing capacities:}} load on each middlebox should not exceed its processing capacity. Overload of middleboxes has to be avoided since it may cause loss of traffic, delay, incorrect traversal sequence or other problems.
\vspace{-7pt}
\item {\bf{Traversal constraints:}} required network functions have to be applied to any given flow in a correct order.   

\end{itemize}

The switch memory constraint is important: flow table overflow is a serious problem that can significantly degrade network performance and, therefore should be avoided. 
Because this constraint is of integer type, it makes the problem of finding an optimal routing solution hard.
If, in addition, middleboxes are added to the network, finding such routing becomes even harder.

In this paper we present an approach based on multi-point-to-point trees that efficiently finds a routing with a guarantee on the maximum number of rules in a single switch, while satisfying all other network constraints.
Moreover, our routing solution scales well with the network size: the explicit bound $C+2|E_{0}|+|V_{T}|-2|V_{pm}|$ on the number of rules is additive and depends linearly on the number of destination nodes ($|V_{T}|$), links ($|E_{0}|$) and flow classes ($C$) in the network.

This paper is organized as follows: in Section 2 we introduce our network model and necessary notations.
In Section 3 we describe our routing solution, in Section 4 we evaluate its performance by simulations and demonstrate its advantages over several other routing schemes. 
Finally, we compare our solution with related works in Section 5, and conclude in Section 6.

\begin{table}
\begin{center}
\begin{tabular}{ |c|c| } 
 \hline
 $m(v_{j})$&capacity of switch $v_{j}\in V_{sw}$ \\  \hline
 $r(v_{j})$ &number of rules in switch $v_{j}\in V_{sw}$\\  \hline
 $b(v_{j})$&capacity of PM $v_{j}\in V_{pm}$ \\  \hline
$g(e)$ &capacity of link $e$ \\  \hline
&  commodity $i$ \\
  $com_{i}$&with source $s_{i}$, destination $t_{i}$,\\
$<s_{i},t_{i},d_{i},c_{i}>$& demand $d_{i}$, and class $c_{i}$ \\ \hline
 $p_{i}$ &cost (in PM resources) of $com_{i}$\\  \hline
$M$ & total number of commodities \\ \hline
$C$ &number of different traffic classes\\  \hline
 $V_{T}$ &set of distinct destinations\\  \hline
\end{tabular}
\end{center}
\caption{Main notations.}
\label{table}
\end{table}


\section{Problem Formulation}

\subsection{Network Topology and Resources}
We assume that the network topology is defined by a directed graph $G_{0}=(V_{0},E_{0})$, where $V_{0}$ is the set of its nodes and $E_{0}$ is the set of edges.
Each node corresponds either to a switch or to a PM, and each edge is a link connecting either two switches, a switch with a PM, or a PM with a switch. 
We denote by $V_{sw}$ and $V_{pm}$ the node sets corresponding to switches and PMs, respectively, so that $V_{sw}\cup V_{pm}=V_{0}$, and $V_{sw}\cap V_{pm}=\emptyset$.
It will be assumed for simplicity that each PM is connected with a single switch by bi-directional links as shown in Fig. \ref{g}. 
Let $V_{sw\rightarrow pm}$ be the subset of nodes in $V_{sw}$ that are directly connected to the PM nodes ($V_{sw\rightarrow pm}=\{sw1, sw2, sw6\}$ in Fig. \ref{g}).

Each switch has a certain memory capacity that can be expressed as a number of rules that it can accommodate. 
We will denote this number by $m(v_{j})$ for a switch located at node $v_{j}$ ($j=1,\dots,|V_{sw}|)$, where $|A|$ is the cardinality of a set $A$.
Additionally, let $r(v_{j})$ be the number of rules in this switch in a routing solution.

Although a PM may have several types of resources (e.g. memory, CPU), it will be assumed for simplicity that each PM is characterized by a single resource capacity that will be denoted by $b(v_{j})$ for a PM located at node $v_{j}$ ($j=1,\dots,|V_{pm}|$).
Similarly, each link $e_{k}\in E_{0}$ $(k=1,\dots, |E_{0}|)$ has an associated link capacity that will be denoted by $g(e_{k})$.

%

\subsection{Network Functions and Commodities}
There exist several types of network functions (firewall, IPS, IDS, WAN optimization, etc.), and each function has its own cost per unit of traffic in terms of PM resources.
Although in this work we assume that this processing cost is the same for all PMs, it is easy to generalize it to the case when the costs are distinct for different PMs.

Additionally, there is a set of $M$ traffic demands or ``commodities'' that have to be routed in the network.
We will use the terms traffic demand and commodity interchangeably.
Commodity $com_{i}$ is defined by a four-tuple $com_{i}=$ $<s_{i},t_{i},d_{i},c_{i}>$, where $i=1,\dots, M$.
Here $s_{i}\in V_{sw}$ and $t_{i}\in V_{sw}$ are, respectively, source and destination nodes, $d_{i}$ is an amount of flow that has to be routed for commodity $com_{i}$, which we will call commodity's demand, and $c_{i}$ is an ordered set of network functions required by this commodity. 
Any such ordered set of network functions defines the class of a commodity.
We will denote by $C$ the total number of different classes of traffic demands.
Due to various functional requirements, different commodities may have different per unit of traffic costs in terms of PM's processing power.
Let $p(i)$ be such cost per unit of traffic for traffic demand $com_{i}$.

Each PM hosts at most $C$ VMs, where a single VM corresponds to a single commodity class. 
It is assumed that when a packet from a commodity of class $k$ arrives to a PM, it is transfered to the virtual machine associated with class $k$, and all network functions of class $k$ are applied to this packet in a correct order.
Distribution of each PM's processing capacity among $C$ VMs has to be determined.
It is assumed, however,  that positions of PMs (nodes $V_{pm}$) are given as an input and are not subject to change. 

By $V_{T}$ we will denote the set of distinct destinations, then $|V_{T}|\leq M$, $|V_{T}|\leq |V_{sw}|$. Main notations are summarized in Table \ref{table}.




%

\subsection{Routing via Integer Linear Optimization}
In this work we employ the idea of consolidated middleboxes, and each packet belonging to $com_{i}$ gets all functional treatment specified by ${c}_{i}$ at a single PM.
It is allowed, however, that a single commodity's traffic is split into several paths from $s_{i}$ to $t_{i}$, and distinct paths may traverse distinct PMs.
We point out that splitting occurs at the IP flow level and not at the packet level.
This is similar to Equal Cost Multipath \cite{c16} in Data Centers, where hashing is used to split traffic at the IP flow level for routing on multiple paths.

If the traffic demands are known in advance, an optimization problem can be posed whose feasible solution defines a routing that satisfies all network constraints.
The variables of this optimization problem $f_{i}^{x}(e)$ are the amount of traffic of commodity $com_{i}$ on edge $e\in E_{0}$, $i=1,\dots, M$. 
Here superscript $x\in\{0,1\}$ with zero value corresponds to the traffic that has not visited a consolidated middlebox, and unit value is used to denote the traffic that has been processed by the required network functions. 
There are thus $2\cdot M\cdot |E_{0}|$ variables in this optimization problem. 
Let $d_{i}(v)$ be the demand from a node $v\in V_{sw}$ for the commodity $com_{i}$.
Note that $d_{i}(v)=d_{i}$ if $v=s_{i}$, and is zero, otherwise.
The problem is formulated as follows.
\setlength{\belowdisplayskip}{0pt} \setlength{\belowdisplayshortskip}{2pt}
\setlength{\abovedisplayskip}{0pt} \setlength{\abovedisplayshortskip}{2pt}
{\center {{\bf{ILP Optimization (1):}}}}
\begin{subequations}
\begin{equation*}
{\bf{min}} \sum\limits_{\substack{e\in E_{0},1\leq i\leq M,\\ x\in \{0,1\}}}f_{i}^{x}(e),
\end{equation*}
\begin{equation*}
\hspace{-3.85cm}\forall v\in V_{sw},\;\forall  i:\;1\leq i\leq M: 
\end{equation*}
\begin{equation}
\hspace{-1.9cm}\sum\limits_{(v,w)\in E_{0}}f_{i}^{0}(v,w)-\sum\limits_{(u,v)\in E_{0}}f_{i}^{0}(u,v)=d_{i}(v),
\label{1a}
\end{equation}
\begin{equation*}
\hspace{-2.55cm}\forall v\in V_{sw}: \;v\neq t_{i}, \; \forall  i: \;1\leq i\leq M: 
\end{equation*}
\begin{equation}
\hspace{-2.5cm}\sum\limits_{(v,w)\in E_{0}}f_{i}^{1}(v,w)-\sum\limits_{(u,v)\in E_{0}}f_{i}^{1}(u,v)=0,
\label{1b}
\end{equation}
\begin{equation}
\hspace{-1.65cm}\forall v\in V_{pm},\;  \forall   i: \;1\leq i\leq M: \hspace{3pt}f_{i}^{1}(u,v)=0, 
\label{1c}
\end{equation}
\begin{equation}
\forall v\in V_{pm},\;  \forall   i: \;1\leq i\leq M: \hspace{3pt}f_{i}^{0}(u,v)=f_{i}^{1}(v,u), 
\label{1d}
\end{equation}
\begin{equation}
\hspace{-1.9cm}\forall v\in V_{pm}: \hspace{3pt}\sum\limits_{1\leq i\leq M}p(i)\cdot f_{i}^{0}(u,v)\leq b(v),
\label{1e}
\end{equation}
\begin{equation}
\hspace{-1.05cm}\forall e\in E_{0}:\hspace{3pt}\sum\limits_{1\leq i\leq M}f_{i}^{0}(e)+\sum\limits_{1\leq i\leq M}f_{i}^{1}(e)\leq g(e),
\label{1f}
\end{equation}
\begin{equation}
\hspace{-4.2cm}\forall v\in V_{sw}:\hspace{3pt} r(v)\leq m(v),
\label{1g}
\end{equation}
\begin{equation}
\hspace{-0.22cm}\forall e\in E_{0}, \; \forall i:\; 1\leq i\leq M, \; \forall x\in \{0,1\}:\hspace{3pt}f_{i}^{x}(e)\geq 0.
\label{1h}
\end{equation}
\end{subequations}

\begin{figure}[t]
\begin{center}
  \begin{subfigure}[t]{0.2\textwidth}
      \includegraphics[width=\textwidth]{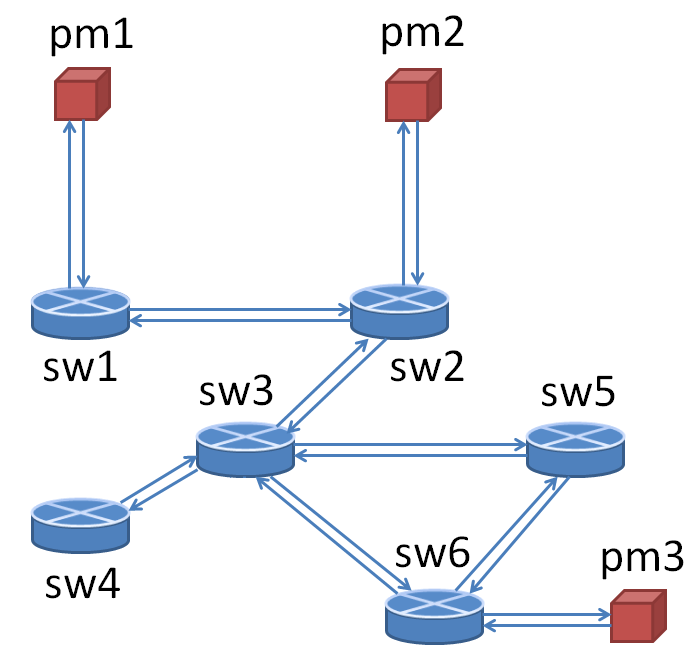}
      \caption{Example of a real network topology defined by graph $G_{0}=(V_{0},E_{0})$.}  
      \label{g}    
\end{subfigure}%
~~~~~
\begin{subfigure}[t]{0.2\textwidth}
      \includegraphics[width=\textwidth]{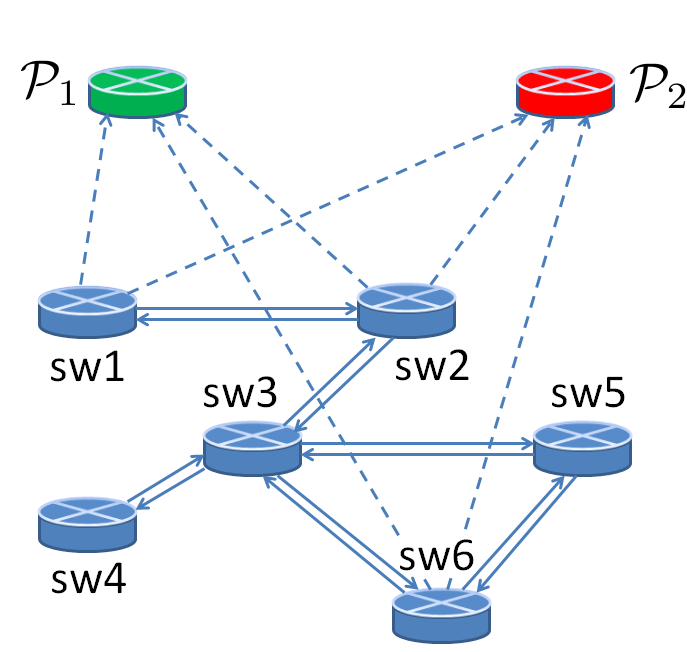}
      \caption{Grpah $G_{1}=(V_{1},E_{1})$ constructed from graph $G_{0}$ at Step 1.}
      \label{g1}          
\end{subfigure}
~~~~~
\begin{subfigure}[t]{0.18\textwidth}
      \includegraphics[width=\textwidth]{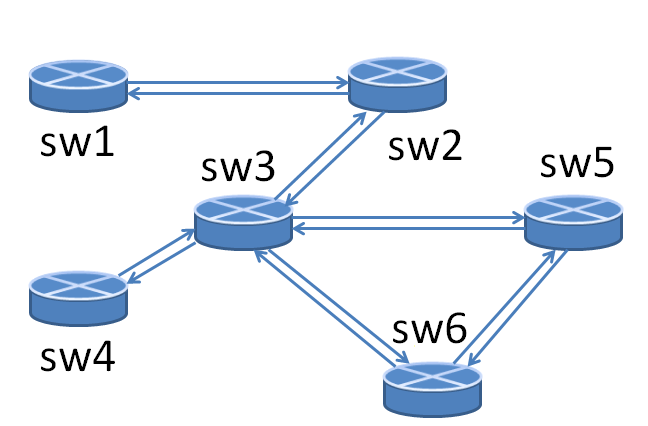}
      \caption{Graph $G_{2}=(V_{2},E_{2})$ constructed from graph $G_{1}$ at Step 2.}
      \label{g2}          
\end{subfigure}
\end{center}
\caption{Example of a given graph $G_{0}$ and graphs $G_{1}$ and $G_{2}$ constructed at the first and the second steps of our algorithm, respectively.}
\label{ba}

   \end{figure}

Constraints \eqref{1a} and \eqref{1b} are flow conservation constraints for switches, constraint \eqref{1c} forbids the traffic that has already been processed by a middlebox (PM), to visit a middlebox again. 
Next constraint \eqref{1d} says that all unprocessed traffic becomes processed at the PM associated with node $v\in V_{pm}$. 
Further, constraint \eqref{1e} is a PM processing capacity constraint. 
The following constraint \eqref{1f} is a link capacity constraint, and condition \eqref{1g} corresponds to the switch memory constraint.
Finally, \eqref{1h} requires that all flow values are nonnegative. 
The objective function of this optimization problem is the total flow over all edges. 
This choice of the objective function guarantees that no cycles will exist in an optimal solution.
Notice that there is no constraint $f_{i}^{0}(v,u)=0$ similar to constraint \eqref{1c}, because it will be automatically satisfied due to the optimization's objective function.

Solution to this optimization problem expressed in terms of variables $f_{i}^{x}(e)$ can be translated to a path-flow formulation \cite{c10}, and the routing rules in switches can be obtained that implement this path-flow solution. 
Each routing rule in a switch corresponds to a single path in the path-flow solution.
Notice that in the solution to the optimization problem, more than one source-destination path can be used to transfer traffic for a single commodity.

The optimization problem formulated above contains integer switch memory constraints \eqref{1g} and thus belongs to the class of Integer Linear Programs (ILP).
This problem, therefore, is NP-hard, and it is extremely difficult to obtain its solution.
In this work, we adapt the idea of  Multipoint-To-Point Trees to construct a feasible routing scheme for SDN-enabled networks with middleboxes and known traffic demands. 
Although the integer switch memory constraints are not explicitly incorporated into our solution, we can obtain the worst case bound on the number of rules in each switch. 
Moreover, we show that this bound scales well with the network size and is low enough for our routing scheme to be implemented in the networks with existing switches.


\section{Solution Overview}

\subsection{MPTPT Approach}
In this work we take advantage of the capabilities provided by SDN to design efficient routing. 
In particular, SDN facilitates global design optimization based on inputs and measurements collected from various points of the network, and the ability to translate design solutions into rules which can be downloaded to the switches.
One of the major components of our routing solution is multipoint-to-point trees that were previously used, for example, by the label based forwarding mechanism of MPLS \cite{c11}.
Each multipoint-to-point tree is rooted at some node, and all its edges are oriented towards this root node.
Such trees can be used to route traffic from several sources to a single destination, and each tree is assigned with its own tag which is used to label all traffic belonging to this tree.
Utilization of MPTPTs helps to reduce the number of routing rules in the whole network \cite{c6}.

Our solution contains two main steps. 
These steps are purely computational (not actual routing steps), and allow to determine how the traffic for each commodity is labeled and routed.
At the first step we route all traffic from the sources $s_{i}$, ($i=1,\dots,$ $M$) to PMs.
At the second step, we route all traffic that has been processed by the required network functions during the first step from the PMs to the corresponding destinations $t_{i}$, ($i=1,\dots, M$).
Both steps involve construction of MPTP trees: there are $C$ roots for multipoint-to-point trees built at the first step, where each root corresponds to a particular flow class, and there are
$|V_{T}|$ roots for the trees at the second step. 
There can be in general more than one MPTP tree rooted at a single node.
In Fig. \ref{diagram} we show the schematic of our MPTPT-based routing algorithm.

\subsection{Step 1: Routing from Sources to PMs}

At the first step we consider a graph $G_{1}=(V_{1},E_{1})$ which is obtained from the initial graph $G_{0}$ as follows: we add $C$ additional nodes $\mathcal{P}_{1},\dots, \mathcal{P}_{C}$ such that node $\mathcal{P}_{k}$ corresponds to the traffic class $k$.
This set of $C$ new nodes is denoted by $V_{\mathcal{P}}$, and $|V_{\mathcal{P}}|=C$.
We further remove "PM" nodes belonging to the set $V_{pm}$, together with the edges going to and from these nodes.
Then, we connect each node from $V_{sw\rightarrow pm}$ by edges to every node from $V_{\mathcal{P}}$.
These new edges are not assigned with capacities explicitly, but the maximum amount of flow on them will be determined by the capacities of PMs and the capacities of removed links from graph $G_{0}$ that were connecting nodes in $V_{sw\rightarrow pm}$ with nodes $V_{pm}$.
The vertex set of graph $G_{1}$ is a union of node sets $V_{sw}$ and $V_{\mathcal{P}}$: $V_{1}=V_{sw}\cup V_{\mathcal{P}}$.
Number of links in the graph $G_{1}$ is $|E_{1}|=|E_{0}|+|V_{pm}|\cdot (C-2)$.
In Fig. \ref{ba} we show an example of a network topology defined by a graph $G_{0}$ (Fig. \ref{g}) and corresponding constructed graph $G_{1}$ (Fig. \ref{g1}).
In this example it is assumed that there are two classes of flows and the nodes $\mathcal{P}_{1}$ and $\mathcal{P}_{2}$ are associated with flow classes one and two, respectively.
In Fig. \ref{g1} the new added links are shown by dashed arrows.

We additionally modify destinations of the given commodities. 
In particular, destination of all traffic demands of class $k$ is node $\mathcal{P}_{k}$, $k=1,\dots, C$. 
Therefore, for each commodity $com_{i}$, its destination is one of the nodes in $V_{\mathcal{P}}$. 
We can now formulate an LP optimization problem that we solve at the first step of our method.
In contrast to the commodity-based ILP problem considered in the previous subsection, the optimization here is in a tree-based formulation, and we do not distinguish traffic from different sources if they are for the same destination, i.e. if they belong to the same network class.
Let $\hat{v}$ denote a PM connected to node $v\in V_{sw\rightarrow pm}$ in graph $G_{0}$ (for example, $\hat{v}=pm3$ for $v=sw6$ in the example from Fig. \ref{ba}), and $p(t)$, where $t\in V_{\mathcal{P}}$, denotes the cost of PM resources per unit of traffic of class corresponding to the node $t$.

 \begin{figure}[t!]
      \centering
      \includegraphics[scale=0.4]{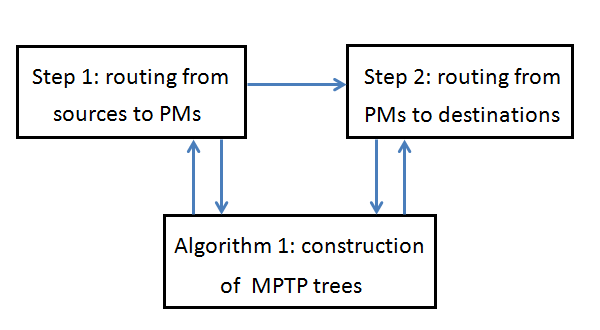}
      \caption{Schematic of the MPTPT-based routing algorithm.}
      \label{diagram}
   \end{figure}

\setlength{\belowdisplayskip}{0pt} \setlength{\belowdisplayshortskip}{0pt}
\setlength{\abovedisplayskip}{0pt} \setlength{\abovedisplayshortskip}{0pt}
{\center {{\bf{LP Optimization (2) of Step 1:}}}}
\begin{subequations}
\begin{equation*}
{\bf{min}} \sum\limits_{e\in E_{1}, t\in V_{\mathcal{P}}}f_{t}(e),
\end{equation*}
\begin{equation*}
\hspace{-4.15cm}\forall  t\in V_{\mathcal{P}}, \; \forall v\in V_{1}, \; v\neq t: 
\end{equation*}
\begin{equation}
\hspace{-2cm}\sum\limits_{(v,w)\in E_{1}}f_{t}(v,w)-\sum\limits_{(u,v)\in E_{1}}f_{t}(u,v)=d_{t}(v),
\label{2a}
\end{equation}
\begin{equation}
\hspace{-3.03cm}\forall e\in E_{1}\cap E_{0}:\hspace{3pt}\sum\limits_{t\in V_{\mathcal{P}}}f_{t}(e)\leq g(e),
\label{2b}
\end{equation}
\begin{equation}
\hspace{-0.245cm}\forall v\in V_{sw\rightarrow pm}:\hspace{3pt}\sum\limits_{t\in V_{\mathcal{P}}}f_{t}(v,t)\leq \min\bigl\{g(v,\hat{v}),g(\hat{v},v)\bigl\},
\label{2c}
\end{equation}
\begin{equation}
\hspace{-1.85cm}\forall  v\in V_{sw\rightarrow pm}:\hspace{3pt}\sum\limits_{t\in V_{\mathcal{P}}}p(t)\cdot f_{t}(v,t)\leq b(\hat{v}),
\label{2d}
\end{equation}
\begin{equation}
\hspace{-3.5cm}\forall e\in E_{1}, \; \forall t\in V_{\mathcal{P}}:\hspace{3pt}f_{t}(e)\geq 0.
\label{2e}
\end{equation}
\end{subequations}
\vspace{2pt}

In this optimization problem variable $f_{t}(e)$ is an amo-unt of flow to destination $t\in V_{\mathcal{P}}$ on link $e\in E_{1}$.
Constraint \eqref{2a} is a flow conservation at node $v$, condition \eqref{2b} is a link capacity constraint that should be satisfied for any link that belongs to the both edge sets $E_{0}$ and $E_{1}$ of graphs $G_{0}$ and $G_{1}$, respectively.
Further, constraint \eqref{2c} is a link capacity constraint for the links that connect switches with PMs in graph $G_{0}$. 
This constraint is necessary for feasibility of the solution to optimization problem (2) in the original graph $G_{0}$.
Notice that in the right hand side of \eqref{2c} there is a minimum between capacities of the links going from a switch to a PM and from a PM to a switch. 
It will guarantee that all traffic processed at a PM can be send back to a switch connected to this PM.
Next constraint \eqref{2d} is a PM capacity constraint, and by \eqref{2e} we require that flow on each link is nonnegative. 
As in the ILP optimization problem (1), we minimize the total network flow to avoid cycles.

Solution to the optimization problem (2) determines how the traffic is routed from the sources to the PMs. 
Using Algorithm Flow2Trees$(t)$ from \cite{c6} that is listed as Algorithm \ref{A1} below for completeness, from a basic feasible solution \cite{c10} $f_{t}(e)$ to the LP (2) we construct multipoint-to-point trees rooted at the destination  nodes from $V_{\mathcal{P}}$, so that all network traffic in the solution is distributed among these trees. 
Each tree contains traffic of the same class, leafs of a tree are the sources for this traffic class, and amount of traffic from each source in any tree can be determined. 
It is possible that several $V_{sw\rightarrow pm}$ nodes belong to the same tree, i.e. one tree can route traffic to several PMs.
Algorithm \ref{A1} is iteratively applied to construct trees to each destination $t\in V_{\mathcal{P}}$.
We will provide an upper bound on a total number of trees in the subsection $3.5$.
We refer the reader to \cite{c6} for the details and analysis of Algorithm \ref{A1}.
\begin{algorithm}
\PrintSemicolon
\SetKwInOut{Input}{Input}\SetKwInOut{Output}{Output}
\Input{$G=(V ,E)$, $t$, $f_{t}(e)$ ($\forall e\in E$).}
\Output{Set of MPTP trees rooted at $t$ and containing all traffic to $t$.}
\While {\mbox{there is a source $s$ with demand to $t$}}{
using only edges $e$ with flow to $t$ ($f_{t}(e)>0)$, construct a tree $R$ to $t$ spanning all sources with demand to $t$\;
move as much flow as possible to $R$\;
}
\caption{Flow2Trees$(t)$\label{A1}}
\end{algorithm}

\subsection{Step 2: Routing from PMs to Destinations}

At the second step of our algorithm we use MPTP trees to route traffic from the PMs to destinations in graph $G_{2}$ obtained from $G_{1}$ as follows.
First, nodes $V_{\mathcal{P}}$ and links to them are removed from the network.
Therefore, the node set of the resulting graph $G_{2}=(V_{2},E_{2})$ only contains nodes from $V_{sw}$: $V_{2}=V_{sw}$.
Number of links in graph $G_{2}$ is $|E_{2}|=|E_{1}\cap E_{0}|=|E_{0}|-2\cdot |V_{pm}|$.
Second, the link capacities are updated: for each link $e$, the amount of traffic on it  in the solution to (2) is subtracted from this link's initial capacity $g(e)$.
We will denote by $\bar{g}(e)$ the updated capacity of link $e$.
Graph $G_{2}$ corresponding to graph $G_{0}$ from Fig. \ref{g} is shown in Fig. \ref{g2}.

\begin{figure}[t]
\begin{center}
  \begin{subfigure}[t]{0.3\columnwidth}
      \includegraphics[width=\columnwidth]{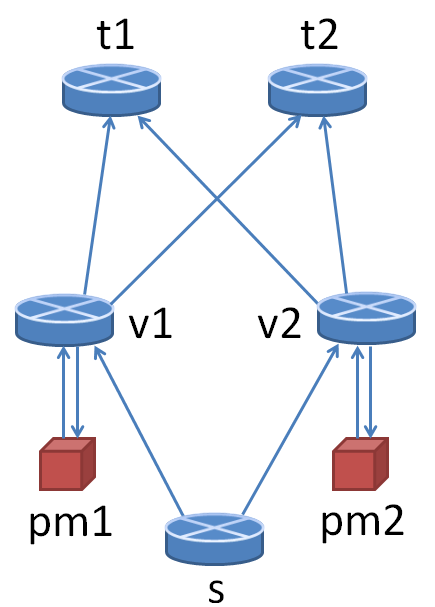}
      \caption{Network topology, $G_{0}=(V_{0},E_{0})$.}  
      \label{ambig1}    
\end{subfigure}%
~~~~~
\begin{subfigure}[t]{0.3\columnwidth}
      \includegraphics[width=\columnwidth]{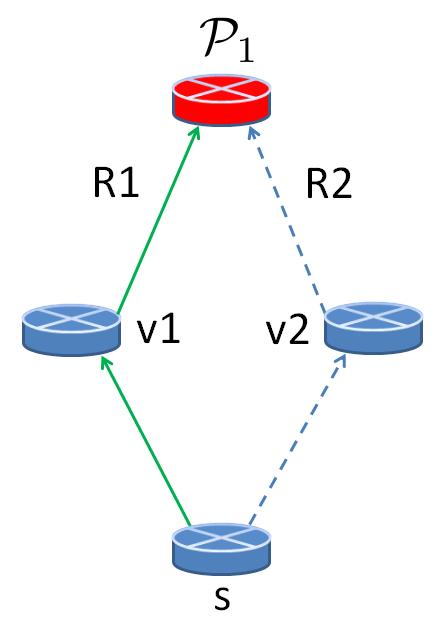}
      \caption{Graph $G_{1}=(V_{1},E_{1})$ and trees $R_{1}$, $R_{2}$ obtained at Step 1.}
      \label{ambig2}  
\end{subfigure}
\end{center}
\caption{Example that shows possible ambiguity in commodity assignment at Step 2.}
\label{ambig}

   \end{figure}

We then create a set of commodities for the second step.
It is assumed that all traffic processed at a PM $\hat{v}$ returns to switch $v\in V_{sw\rightarrow pm}$ connected to it.
Therefore, all traffic at Step 2 is routed from the nodes $V_{sw\rightarrow pm}$ to the destinations $t_{i}$, where $i=1,\dots, |V_{T}|$.
Solution to optimization (2) determines amount of traffic of every class and from every source arriving to each PM.
However, amount of traffic to each destination $t_{i}$ arriving to a PM, in some cases can not be determined unambiguously.
This can happen when there exist more than one commodities with the same source and of the same class but with different destinations.
We illustrate this possibility with an example from Fig.~\ref {ambig}. 
In Fig.~\ref{ambig1} the network topology is shown: there is only one source node $s$, two PMs and two destination nodes $t1$ and $t2$. 
It is also assumed that there is only one traffic class 1, $30$ units of traffic from $s$ should be sent to $t1$, and $70$ units to $t2$.
Graph $G_{1}$ constructed at the first step of our algorithm is shown in Fig. \ref{ambig2}.
Suppose that two trees to node $\mathcal{P}_{1}$ were obtained at the first step: tree $R1$ ($s\rightarrow v1\rightarrow \mathcal{P}_{1})$, and tree $R2$ ($s\rightarrow v2\rightarrow \mathcal{P}_{1})$.
The links belonging to the trees $R1$ and $R2$ are shown by solid green (tree $R1$) and dashed blue (tree $R2$) lines in Fig. \ref{ambig2}.
Assume for example, that $40$ units of traffic of class 1 belong to the tree $R1$, and $60$ units belong to the tree $R2$.
Therefore, after Step 1 of our algorithm it is known how much traffic of this class from source $s$ arrives to node $v1$ (to be processed at PM1), and how much traffic arrives to node $v2$ (to obtain functional treatment at PM2), but distribution of traffic by destination at nodes $v1$ and $v2$ is unknown.
This information, however, is necessary to define commodities at the second step of our approach, and thus a distribution decision is required.

We will use the following heuristic to determine the traffic distribution by destination at each node $v\in V_{sw\rightarrow pm}$.
Let $R$ be a set of trees obtained at Step 1 of our algorithm that carry traffic of the same class $c$ from a source node $s$  to the root node $\mathcal{P}_{c}$ corresponding to this traffic class. 
In addition, let $T$ be the set of destinations of commodities with source $s$ and of class $c$, and $d_{1},\dots, d_{|T|}$ are corresponding demands.
By the definition of a tree, in each tree $R_{i}$ from set $R$, there is a unique path from $s$ to $\mathcal{P}_{c}$, and therefore, all traffic from $s$ in the same tree obtains functional treatment at a single PM.
According to our heuristic, in each tree $R_{i}$, amount of traffic to destination $t\in T$ is proportional to the fraction of traffic to this destination in the total amount of traffic to all destinations, i.e. proportional to $d_{t}/\sum\limits_{i=1}^{|T|}d_{i}$.
In example from Fig. \ref{ambig}, $R=\{R_{1},R_{2}\}$, $\mathcal{P}_{c}=\mathcal{P}_{1}$, $T=\{t1,t2\}$, $d_{1}=30$ and $d_{2}=70$.
Then, according to the heuristic, in tree $R_{1}$: $30/100\cdot 40=12$ units of traffic are to destination $t_{1}$, $70/100\cdot 40=28$ units of traffic are to destination $t_{2}$.
Similarly, in tree $R_{2}$ the distribution is $30/100\cdot 60 = 18$ and  $70/100\cdot 60 = 42$ units to $t_{1}$ and $t_{2}$, respectively.

Using this distribution heuristic, we form a set of commodities for the second step of our algorithm.
At the Step 2 we do not distinguish traffic from different sources and from different network classes if they have the same destination.
We construct MPTP trees with the roots at the destinations $t_{i}$, $i=1,\dots, |V_{T}|$.
Similarly to Step 1, we first solve the following LP:

{\center {{\bf{LP Optimization (3) of Step 2:}}}}
\begin{subequations}
\begin{equation*}
{\bf{min}} \sum\limits_{e\in E_{2}, t\in V_T}f_{t}(e),
\end{equation*}
\begin{equation*}
\hspace{-4cm}\forall t\in V_{T}, \; \forall v\in V_{2}, \; v\neq t: 
\end{equation*}
\begin{equation}
\hspace{-2cm}\sum\limits_{(v,w)\in E_{2}}f_{t}(v,w)-\sum\limits_{(u,v)\in E_{2}}f_{t}(u,v)=d_{t}(v),
\label{3a}
\end{equation}
\begin{equation}
\hspace{-3.61cm}\forall e\in E_{2}:\hspace{3pt}\sum\limits_{t\in V_T}f_{t}(e)\leq \bar{g}(e),
\label{3b}
\end{equation}
\begin{equation}
\hspace{-3.3cm}\forall e\in E_{2}, \; \forall t\in V_T:\hspace{3pt}f_{t}(e)\geq 0.
\label{3c}
\end{equation}
\end{subequations}

\vspace{3pt}
Here \eqref{3a} and \eqref{3b} are flow conservation and link capacity constraints, respectively, and \eqref{3c} is a requirement for flows to be non negative on each link. 
Using a basic feasible solution to this problem, we apply again Algorithm \ref{A1} and obtain another set of multipoint-to-point trees.
Complete version of our MPTPT-based routing approach is summarized in Algorithm \ref{A2}.

\begin{algorithm}
\DontPrintSemicolon
\SetKwInOut{Input}{Input}\SetKwInOut{Output}{Output}
\Input{$G_{0}=(V_{0},E_{0})$, commodities $com_{i}$ ($i=1,\dots, M$). }
\Output{Set of MPTP trees rooted at PM nodes and destination nodes.}
{\bf{Step 1: routing from sources to PMs:}}\\
\Indp  construct graph $G_{1}=(V_{1},E_{1})$ from $G_{0}=(V_{0},E_{0})$;\\
	obtain commodities for Step 1;\\
find a basic feasible solution to LP (2);\\
find MPTP trees for the solution to LP (2) using Algorithm \ref{A1};\\
\Indm
{\bf{Step 2: routing from PMs to destinations:}}\\
\Indp 
construct graph $G_{2}=(V_{2},E_{2})$ from $G_{1}=(V_{1},E_{1})$;\\
obtain commodities for Step 2;\\
find a basic feasible solution to LP (3);\\
find MPTP trees for the solution to LP (3) using Algorithm \ref{A1}.\\

\caption{MPTPT-Based Routing\label{A2}}
\end{algorithm}

After both steps of our algorithm are performed, we can determine for any initial commodity $<s_{i}, t_{i},d_{i}, c_{i}>$ what trees carry its traffic to the destination $t_{i}$.
Each commodity's packet is assigned with two tags at the source switch: one for a tree label from Step 1, and another one for a tree label from Step 2.
The first label can be removed from a packet during functional processing at a PM, and therefore the maximum number of routing rules in a single switch does not exceed the total number of multipoint-to-point trees of both steps.
As suggested in previous works (e.g. \cite{c1}), VLAN and ToS fields of a packet header can be used for labels.


\subsection{Analysis}
In this subsection we provide and prove an upper bound on the total number of MPTP trees generated by Algorithm \ref{A2}.
Each tree has its own label and any switch may contain at most one routing rule corresponding to this tree.
The bound, therefore, also limits the number of routing rules in any switch.
\begin{thm}
Number of MPTP trees produced by Algorithm \ref{A2} does not exceed $C+2|E_{0}|+|V_{T}|-2|V_{pm}|$.
\end{thm}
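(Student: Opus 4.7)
The plan is to bound the number of MPTP trees output by the two calls to Flow2Trees inside Algorithm~\ref{A2} separately, and then add. The key tool is the bound underlying Flow2Trees in \cite{c6}: for a basic feasible solution to a multi-destination flow LP on a directed graph with $R$ distinct destinations and $K$ linear capacity inequalities coupling the flow variables, Flow2Trees produces at most $R+K$ multipoint-to-point trees in total. The intuition is that without capacity constraints the BFS support of each per-destination flow $f_t$ is a forest, so Flow2Trees returns one tree per destination; each tight capacity row allows one extra cycle in some $f_t$'s support, which in turn costs at most one extra tree during the decomposition.

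For Step~1, applied to LP~(2) on $G_1$, there are $R_1=C$ destinations (the class nodes in $V_{\mathcal{P}}$). The capacity inequalities are \eqref{2b}, one per edge of $E_1\cap E_0$; \eqref{2c}, one per node of $V_{sw\rightarrow pm}$; and \eqref{2d}, one per node of $V_{sw\rightarrow pm}$. Using the standing assumption that each PM attaches to a distinct switch, so that $|V_{sw\rightarrow pm}|=|V_{pm}|$ and $|E_1\cap E_0|=|E_0|-2|V_{pm}|$ (already baked into the paper's stated $|E_1|=|E_0|+|V_{pm}|(C-2)$), the count is
\begin{equation*}
K_1=(|E_0|-2|V_{pm}|)+|V_{pm}|+|V_{pm}|=|E_0|,
\end{equation*}
so Step~1 produces at most $C+|E_0|$ trees.

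For Step~2, applied to LP~(3) on $G_2$, there are $R_2=|V_T|$ destinations and only the link-capacity rows \eqref{3b}, one per edge of $E_2$, yielding $K_2=|E_2|=|E_0|-2|V_{pm}|$. Step~2 therefore produces at most $|V_T|+|E_0|-2|V_{pm}|$ trees. The tree labels of the two steps are distinct, since the first-step label is stripped at the PM before the second-step label takes over, so the two counts sum:
\begin{equation*}
(C+|E_0|)+(|V_T|+|E_0|-2|V_{pm}|)=C+2|E_0|+|V_T|-2|V_{pm}|.
\end{equation*}

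The main obstacle is justifying the $R+K$ bound against the non-standard capacity rows \eqref{2c} and \eqref{2d}, which couple flows to different class nodes via per-switch scalar inequalities rather than via single-edge link capacities. The proof in \cite{c6} only needs (i) each capacity row to be a single linear inequality, so it contributes at most one tight row to any BFS, and (ii) each Flow2Trees iteration to remove at least one edge from the support of the current $f_t$; both facts apply unchanged to \eqref{2c} and \eqref{2d}, so the bound transfers. Everything else is the arithmetic done above.
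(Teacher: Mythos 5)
Your proof is correct and follows essentially the same route as the paper: invoke the destinations-plus-bundle-constraints bound from \cite{c6} for each of the two basic feasible solutions, count $C+|E_0|$ for Step~1 and $|V_T|+|E_2|=|V_T|+|E_0|-2|V_{pm}|$ for Step~2, and add. If anything, you are slightly more careful than the paper, which only asserts that ``a similar bound can also be established'' for LP~(2), whereas you explicitly argue why the per-switch rows \eqref{2c} and \eqref{2d} each count as a single coupling inequality so the bound transfers.
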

\begin{proof}
It was shown in \cite{c6} that when Algorithm \ref{A1} is iteratively applied to a basic feasible solution of the multicommodity flow problem (3), the maximum possible number of created trees is $|V_{T}|+|E_{2}|$, i.e. bounded above by the sum of number of destinations and number of links in a network. 
The second term in this sum ($|E_{2}|$) corresponds to the number of bundle constraints in LP. 
A constraint is called bundle if it involves variables for different destinations.
In optimization problem (3) link capacity constraints \eqref{3b} are bundle, and there are $|E_{2}|$ such constraints.
Although optimization problem (2) is slightly different from (3), a similar bound for it can also be established.
Number of bundle constraints in (2) is $|E_{0}|-2\cdot |V_{pm}|+|V_{pm}|+|V_{pm}|=|E_{0}|$, and number of destinations is equal to the number of traffic classes $C$.
Therefore, the total number of trees produced by Algorithm \ref{A2} is $C+|E_{0}|+|V_{T}|+|E_{2}|=C+2|E_{0}|+|V_{T}|-2|V_{pm}|$.
\end{proof}
Notice that while our bound depends on the number of classes $C$, it does not depend on the number of commodities, because $|V_{T}|$ is bounded by $|V_{sw}|$.
The bound is additive and thus scales well with the network size. 
Moreover, as shown by simulations, the real number of routing rules obtained by our algorithm is generally much smaller than this worst case bound.
It is crucial that a basic feasible solution is used as an input to the Algorithm \ref{A1} at both steps of Algorithm \ref{A2}.
We refer the reader to  \cite{c10} and \cite{c6} for a more detailed discussion of basic feasible solutions and bundle constraints.

\begin{figure*}[t]
\begin{center}
  \begin{subfigure}[t]{0.25\textwidth}
      \includegraphics[width=\textwidth]{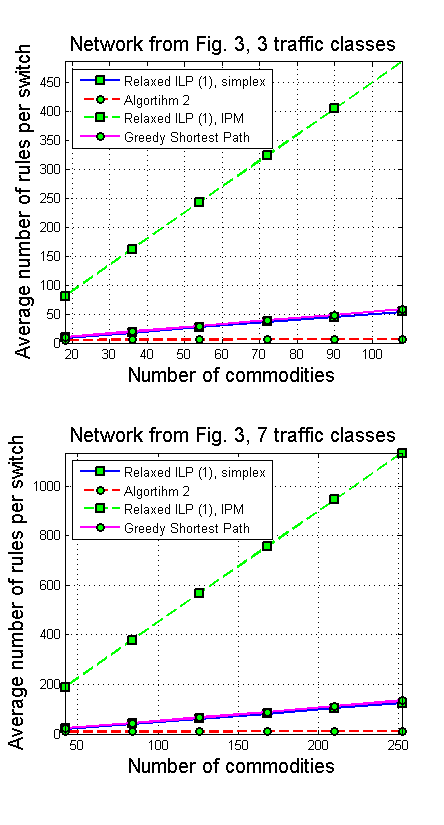}
      \caption{Network topology from Fig. \ref{ba}.}  
      \label{exp1_a}    
\end{subfigure}%
~~~~~
\begin{subfigure}[t]{0.25\textwidth}
      \includegraphics[width=\textwidth]{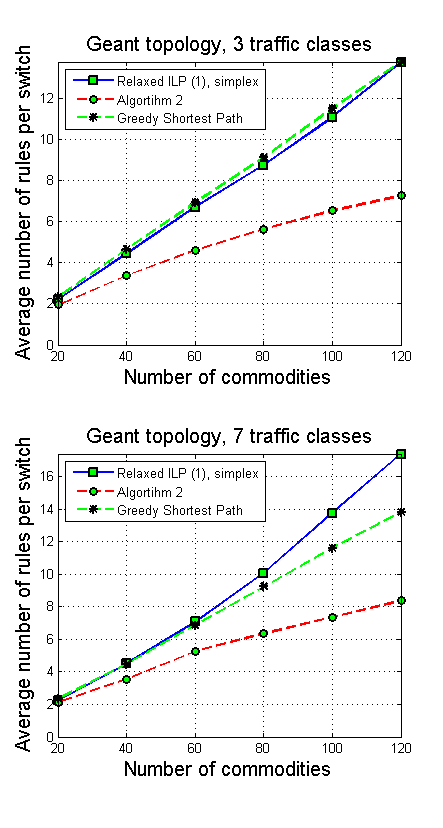}
      \caption{Geant topology.}
      \label{exp1_b}          
\end{subfigure}%
~~~~~
\begin{subfigure}[t]{0.255\textwidth}
      \includegraphics[width=\textwidth]{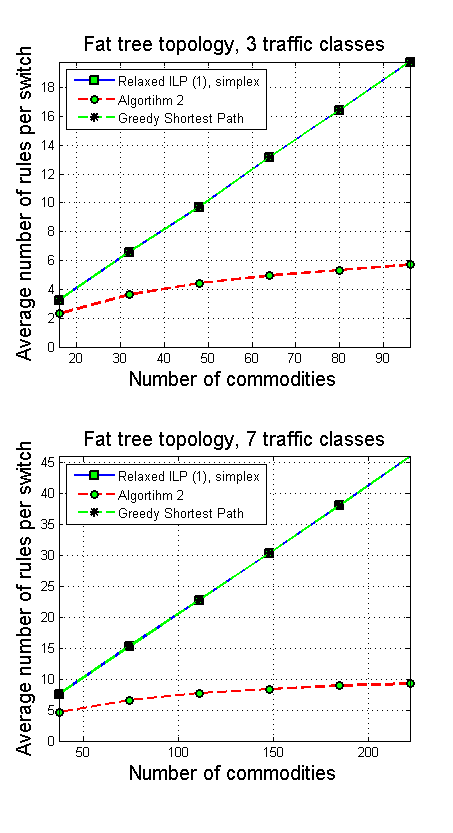}
      \caption{Fat tree topology.}
      \label{exp1_c}          
\end{subfigure}
\end{center}
\caption{Comparison of average numbers of routing rules in switches for three and seven traffic classes.}
\label{exp1}
   \end{figure*}

Therefore, Algorithm 2 efficiently solves a routing problem (it contains two linear optimizations and Algorithm 1 with polynomial time complexity) with a guarantee that the number of routing rules in each switch is limited by an additive bound.

%


\section{Evaluation}

In this section we evaluate the performance of Algorithm \ref{A2} and compare it with three other routing schemes. 
The first routing scheme is defined by optimization problem (1) with relaxed integer switch memory constraint, and a basic feasible solution for it is found using simplex method.
The second scheme uses the same relaxed LP, but an interior point method (IPM) is applied to find a solution.
Finally, the third scheme is based on a greedy shortest path approach.
In this approach the commodities are initially sorted in descending order by their total PM capacity requirement.
Then, iteratively for each commodity a shortest path is found from its source to a PM, and then a shortest path from the PM to commodity's destination.
If link and PM capacity constraints on the shortest path do not allow to send commodity's total demand, a maximum possible fraction of it is sent along this path, and the remaining traffic is sent along the next shortest paths until all commodity's demand is routed. 
If at some point there is no path available to send commodity's residual demand, the algorithm stops.

Our evaluation analysis consists of two experiments. 
In the first experiment we find routing solution using each of four algorithms and calculate an average number of routing rules in switches for each solution.
Second experiment allows to estimate for each routing algorithm the maximum total throughput that it can route.
Both experiments are carried out for three network topologies: example from Fig. \ref{ba}, Geant topology, and fat tree topology.
Geant network contains 41 switch nodes and 9 additional PM nodes that are connected to 9 switch nodes having the highest nodal degree (so that each PM is connected to exactly one switch).
Fat tree topology consists of 22 switch nodes (2 core, 4 aggregation and 16 edge switches), and 6 PM nodes such that each PM node is connected to a single core or aggregate switch node.
Link and PM capacities were fixed in each simulation, and took values, respectively, 100 and 500 for the network on Fig. \ref{ba}, and 500, 500 for Geant topology.
For the fat tree topology links between core and aggregation switches had capacities 200, links between aggregation and edge switches had capacities 10, and links between switches and PMs were fixed at 100.
In addition, each PM had capacity 500.

{\bf{Experiment 1: Average Number Of Routing Rules.}}
In the first experiment we varied number of classes and number of commodities, and each commodity's source, destination and class were generated randomly. 
The demands of the commodities, however, were all equal and fixed at 0.2.
Results of Experiment 1 for 3 and 7 traffic classes are shown in Fig. \ref{exp1}.
It can be observed from the results that Algorithm \ref{A2} allows to reduce average number of routing rules in switches by a factor of up to 10.
We did not add plots corresponding to the interior point method solution for Geant and fat tree topologies because in the IPM solution average number of rules is much higher compared to the other algorithms.
We also performed simulations for one and five traffic classes, and the results look similar to Fig. \ref{exp1}.
The values of bounds on the maximum number of rules in switches are 43, 295 and 137 for the topologies in the same order they are presented in Fig. \ref{exp1} and for 7 traffic classes.
These values were obtained under assumption that $|V_{T}|=|V_{sw}|$ and therefore, limit the number of routing rules in each switch for any arbitrary large number of commodities.

%

\begin{figure*}[t]
\begin{center}
  \begin{subfigure}[t]{0.26\textwidth}
      \includegraphics[width=\textwidth]{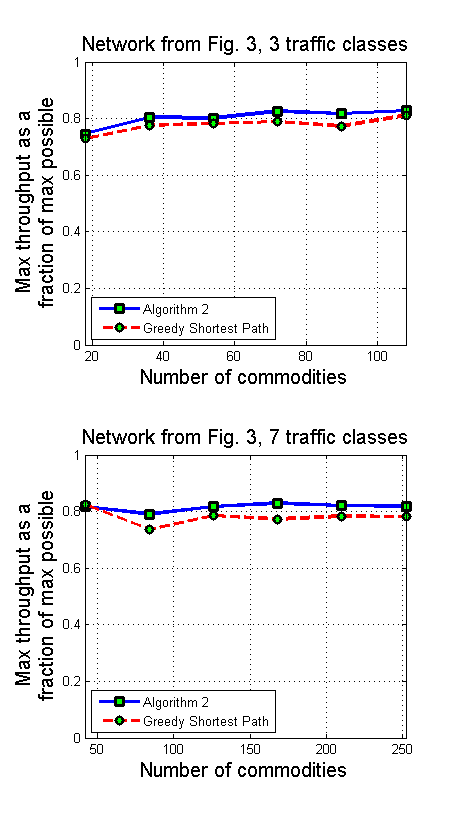}
      \caption{Network topology from Fig. \ref{ba}.}  
      \label{exp2_a}    
\end{subfigure}%
~~~~~
\begin{subfigure}[t]{0.26\textwidth}
      \includegraphics[width=\textwidth]{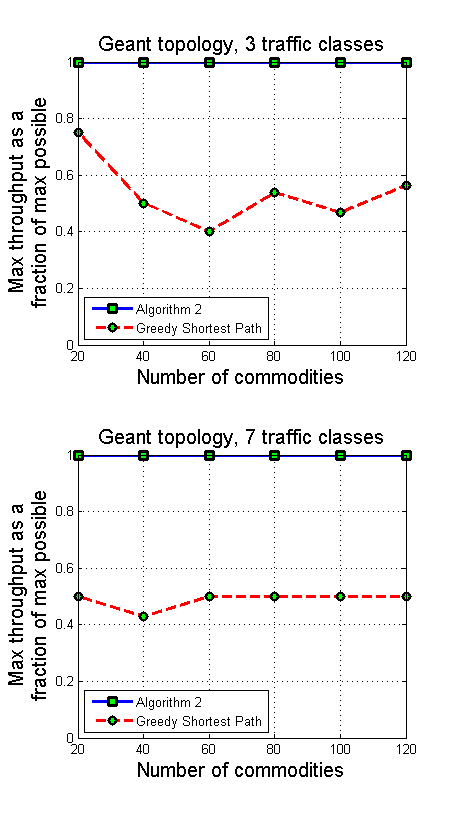}
      \caption{Geant topology.}
      \label{exp2_b}          
\end{subfigure}%
~~~~~
\begin{subfigure}[t]{0.26\textwidth}
      \includegraphics[width=\textwidth]{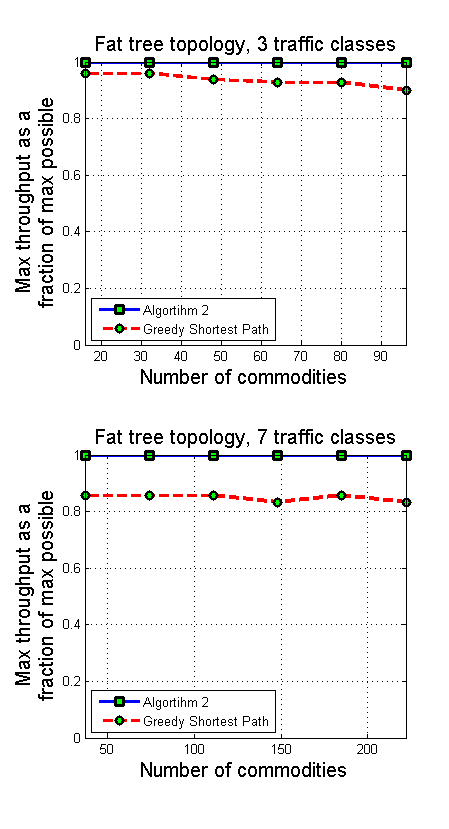}
      \caption{Fat tree topology.}
      \label{exp2_c}          
\end{subfigure}
\end{center}
\caption{Comparison of maximum total throughputs for three and seven traffic classes.}
\label{exp2}
   \end{figure*}

{\bf{Experiment 2: Maximum Total Throughput.}}
In the second experiment we measured the maximum total throughput that can be routed in a network by the Algorithm \ref{A2}.
Notice, that ILP (1) with relaxed switch memory constraint always finds a routing solution when it exists.
Therefore, we used the relaxed LP (1) to determine the maximum possible network throughput.
For a given set of commodities, we increased iteratively demands of all commodities by the same value until the relaxed LP (1) became unfeasible.
We stored this maximum demand value, and repeated the procedure for the Algorithm \ref{A2} and also for the Greedy Shortest Path algorithm.
Results provided in Fig. \ref{exp2} demonstrate that a loss in maximum throughput of the Algorithm \ref{A2} is relatively small.

\section{Related Work}

Most of the previous works on routing in networks with middleboxes aim to achieve a fair load balance among middleboxes \cite{c1}, \cite{c4}. 
However, it is generally assumed that for each given commodity a set of proper paths is provided (or a single path) \cite{c1}, \cite{c4}, \cite{c2}. 
Although this assumption simplifies finding a routing, it has an important disadvantage: it is generally not easy to find a set of suitable paths for all commodities such that all traffic demands can be routed and network constraints are satisfied.
In \cite{c17} the problem setup is similar to ours: middleboxes run as virtual machines at the PMs, traffic demands are known, and a routing linear optimization program is proposed.
The integer switch memory constraint, however, is not incorporated into the routing problem, what makes it easier to find a feasible solution.
The authors also explore the problem of an optimal placement of middleboxes.

An optimization model for Network Intrusion Detection Systems (NIDS) load balancing is presented in \cite{c2}. 
A linear optimization problem contains estimates of the commodities' demands and thus is designed for a carrier-grade traffic environment. 
It is solved periodically to remain an optimal feasible routing for changing traffic demands. 
This optimization problem does not include switch capacity constraints, and its goal is to balance the load among different NIDS.
It is also assumed that for each commodity a precomputed path is given as an input. 
Therefore, the problem solved in \cite{c2} is not exactly a traffic engineering problem, but a load engineering.

Switch memory constraint was taken into account in \cite{c1}, and it was also assumed that there exists a set of suitable paths for each commodity.
Due to the difficulty of the original optimization problem,  it was decomposed in \cite{c1} into two stages. 
At the first (offline) stage, only the switch memory constraint is taken into account, and for each commodity a subset of paths is chosen from its original path set. 
Since the switch memory constraint is integer, the whole offline optimization problem solved at the first stage is an Integer Linear Program (ILP). 
At the second (online) stage, a simpler Linear Program (LP) is formulated to solve a load balancing problem.

\section{CONCLUSION}
In this work we proposed a multipoint-to-point tree based algorithm for SDN-enabled networks with middleboxes and given required traffic demands.
We showed both theoretically and experimentally that in the routing solution obtained by our algorithm, the maximum number of routing rules in a single switch is bounded, and this explicit bound scales well with the network size.
Moreover, the low complexity of the algorithm allows its application the algorithm in dynamic network environment.

\small

\end{document}